\renewcommand*\env@matrix[1][*\c@MaxMatrixCols c]{%
  \hskip -\arraycolsep
  \let\@ifnextchar\new@ifnextchar
  \array{#1}}
\theoremstyle{definition}
\newtheorem{lemma}{Lemma}
\newtheorem{proposition}{Proposition}
\newtheorem{theorem}{Theorem}
\newcommand{\miniblockfull}{
\begin{tikzpicture}[scale=0.2, xscale=-1, baseline={([yshift=-.55ex]current bounding box.center)}]\foreach \x in {0, 1}
    {
    \foreach \y in {0, 1}
    {
    \emptysquare{\x*0.5+\y*0.25}{\y*0.5};
    }
    }\end{tikzpicture}
}
\newcommand{\miniblockthreefirst}{
\begin{tikzpicture}[scale=0.2, xscale=-1, baseline={([yshift=-.55ex]current bounding box.center)}]
\foreach \x in {0, 1}
    {
    \foreach \y in {1}
    {
    \emptysquare{\x*0.5+\y*0.25}{\y*0.5};
    }
    }
    \emptysquare{0}{0};
    \end{tikzpicture}
}
\newcommand{\miniblocktwohorizontal}{
\begin{tikzpicture}[scale=0.2, xscale=-1, baseline={([yshift=-.55ex]current bounding box.center)}]
\foreach \x in {0, 1}
    {
    \foreach \y in {0}
    {
    \emptysquare{\x*0.5+\y*0.25}{\y*0.5};
    }
    }
    \end{tikzpicture}
}
\newcommand{\miniblocktwovertical}{
\begin{tikzpicture}[scale=0.2, xscale=-1, baseline={([yshift=-.55ex]current bounding box.center)}]
\foreach \x in {0}
    {
    \foreach \y in {0,1}
    {
    \emptysquare{\x*0.5+\y*0.25}{\y*0.5};
    }
    }
    \end{tikzpicture}
}
\newcommand{\miniblockone}{
\begin{tikzpicture}[scale=0.2, xscale=-1, baseline={([yshift=-.55ex]current bounding box.center)}]
\foreach \x in {0}
    {
    \foreach \y in {0}
    {
    \emptysquare{\x*0.5+\y*0.25}{\y*0.5};
    }
    }
    \end{tikzpicture}
}
\newcommand{\snakecornerdark}[1]{
\begin{tikzpicture}[scale=0.55, baseline={([yshift=0]current bounding box.center)}]
\draw [decorate,decoration={brace,mirror,amplitude=10pt},xshift=0pt,yshift=0pt]
(0,0) -- (3,0) node [black,midway, yshift=-15pt] {\footnotesize #1};
\foreach \x in {0,...,2}
{
\foreach \y in {0,...,1}
{
\emptysquare{0.5*\x-0.25*\y}{0.5*\y};
}
}
\foreach \y in {0,...,1}
{
\emptysquare{2.5-0.25*\y}{0.5*\y};
}
\filledsquare{0}{0};
\draw[] (1.925,0.5) node[] {\footnotesize $\cdots$};
\end{tikzpicture}
}
\newcommand{\snake}[1]{
\begin{tikzpicture}[scale=0.55, baseline={([yshift=0]current bounding box.center)}]
\draw [decorate,decoration={brace,mirror,amplitude=10pt},xshift=0pt,yshift=0pt]
(0,0) -- (3,0) node [black,midway, yshift=-15pt] {\footnotesize #1};
\foreach \x in {0,...,2}
{
\foreach \y in {0,...,1}
{
\emptysquare{0.5*\x-0.25*\y}{0.5*\y};
}
}
\foreach \y in {0,...,1}
{
\emptysquare{2.5-0.25*\y}{0.5*\y};
}
\draw[] (1.925,0.5) node[] {\footnotesize $\cdots$};
\end{tikzpicture}
}
\newcommand{\snakehalf}[1]{
\begin{tikzpicture}[scale=0.55, baseline={([yshift=0]current bounding box.center)}]
\draw [decorate,decoration={brace,mirror,amplitude=10pt},xshift=0pt,yshift=0pt]
(0,0) -- (3,0) node [black,midway, yshift=-15pt] {\footnotesize #1};
\foreach \x in {0,...,2}
{
\foreach \y in {1}
{
\emptysquare{0.5*\x-0.25*\y}{0.5*\y};
}
}
\foreach \x in {0,...,2}
{
\foreach \y in {0}
{
\filledsquare{0.5*\x-0.25*\y}{0.5*\y};
}
}

\foreach \y in {1}
{
\emptysquare{2.5-0.25*\y}{0.5*\y};
}
\filledsquare{2.5}{0};
\draw[] (1.925,0.5) node[] {\footnotesize $\cdots$};
\end{tikzpicture}
}
\newcommand{\clusters}[2]{
\begin{tikzpicture}[scale=0.55, baseline={([yshift=0]current bounding box.center)}]
\draw [decorate,decoration={brace,amplitude=10pt},xshift=0pt,yshift=0pt]
(0,0) -- (-1.25,3) node [black,midway,xshift=-0.75cm, yshift=-0.15cm] {\footnotesize #2};
\draw [decorate,decoration={brace,mirror,amplitude=10pt},xshift=0pt,yshift=0pt]
(0,0) -- (3,0) node [black,midway, yshift=-15pt] {\footnotesize #1};
\foreach \x in {0,...,2}
{
\foreach \y in {0,...,2}
{
\emptysquare{0.5*\x-0.25*\y}{0.5*\y};
}
}
\foreach \x in {0,...,2}
{
\emptysquare{0.5*\x-1.25}{2.5};
}
\foreach \y in {0,...,2}
{
\emptysquare{2.5-0.25*\y}{0.5*\y};
}
\emptysquare{1.25}{2.5};
\draw[dotted, line width=0.375mm]  (-0.5, 2.25) -- (-0.125, 1.75);
\draw[dotted, line width=0.375mm]  (1.625, 2.25) -- (2, 1.75);
\draw[] (0.8,2.75) node[] {$\cdots$};
\draw[] (1.8,0.75) node[] {$\cdots$};
\end{tikzpicture}
}
\newcommand{\clustersnorth}[2]{
\begin{tikzpicture}[scale=0.55, baseline={([yshift=0]current bounding box.center)}]
\draw [decorate,decoration={brace,amplitude=10pt},xshift=0pt,yshift=0pt]
(0,0) -- (-1.25,3) node [black,midway,xshift=-0.8cm, yshift=-0.15cm] {\footnotesize #2};
\draw [decorate,decoration={brace,mirror,amplitude=10pt},xshift=0pt,yshift=0pt]
(0,0) -- (3,0) node [black,midway, yshift=-15pt] {\footnotesize #1};
\foreach \x in {0,...,2}
{
\foreach \y in {0,...,2}
{
\emptysquare{0.5*\x-0.25*\y}{0.5*\y};
}
}
\foreach \x in {0,...,2}
{
\emptysquare{0.5*\x-1.25}{2.5};
}
\foreach \y in {0,...,2}
{
\emptysquare{2.5-0.25*\y}{0.5*\y};
}
\emptysquare{1.25}{2.5};
\draw[dotted, line width=0.375mm]  (-0.5, 2.25) -- (-0.125, 1.75);
\draw[dotted, line width=0.375mm]  (1.625, 2.25) -- (2, 1.75);
\draw[] (0.8,2.75) node[] {$\cdots$};
\draw[] (1.8,0.75) node[] {$\cdots$};
\filledsquare{1.25-2.5}{2.5};
\filledsquare{1.75-2.5}{2.5};
\filledsquare{2.25-2.5}{2.5};
\filledsquare{3.75-2.5}{2.5};
\end{tikzpicture}
}
\newcommand{\clusterssouth}[2]{
\begin{tikzpicture}[scale=0.55, baseline={([yshift=0]current bounding box.center)}]
\draw [decorate,decoration={brace,amplitude=10pt},xshift=0pt,yshift=0pt]
(0,0) -- (-1.25,3) node [black,midway,xshift=-0.8cm, yshift=-0.15cm] {\footnotesize #2};
\draw [decorate,decoration={brace,mirror,amplitude=10pt},xshift=0pt,yshift=0pt]
(0,0) -- (3,0) node [black,midway, yshift=-15pt] {\footnotesize #1};
\foreach \x in {0,...,2}
{
\foreach \y in {0,...,2}
{
\emptysquare{0.5*\x-0.25*\y}{0.5*\y};
}
}
\foreach \x in {0,...,2}
{
\emptysquare{0.5*\x-1.25}{2.5};
}
\foreach \y in {0,...,2}
{
\emptysquare{2.5-0.25*\y}{0.5*\y};
}
\emptysquare{1.25}{2.5};
\draw[dotted, line width=0.375mm]  (-0.5, 2.25) -- (-0.125, 1.75);
\draw[dotted, line width=0.375mm]  (1.625, 2.25) -- (2, 1.75);
\draw[] (0.8,2.75) node[] {$\cdots$};
\draw[] (1.8,0.75) node[] {$\cdots$};
\filledsquare{0}{0};
\filledsquare{0.5}{0};
\filledsquare{1}{0};
\filledsquare{2.5}{0};
\end{tikzpicture}
}
\newcommand{\dottedsquare}[2]{\draw[pattern color=red!50!white, pattern=dots] (#1, #2) rectangle (#1 +0.5, #2 + 0.5)}
\newcommand{\emptysquare}[2]{\draw[fill=white] (#1, #2) rectangle (#1 +0.5, #2 + 0.5)}
\newcommand{\filledsquare}[2]{\draw[fill=black!25!white, draw=black] (#1, #2) rectangle (#1 +0.5, #2 + 0.5)}
\newcommand{\centeredTikZ}[1]{
\begin{tikzpicture}[scale=0.55, xscale=-1, baseline={([yshift=-.55ex]current bounding box.center)}]
#1
\end{tikzpicture}
}
\newcommand{\Prob}[1]{P\left[ #1 \right]}
\begin{document}
\title{An exactly solvable ansatz for statistical mechanics models}

\author{\normalsize Isaac H. Kim\thanks{The University of Sydney}}

\date{\today} 
\maketitle
\begin{abstract}
We propose a family of ``exactly solvable'' probability distributions to approximate partition functions of two-dimensional statistical mechanics models. While these distributions lie strictly outside the mean-field framework, their free energies can be computed in a time that scales linearly with the system size. This construction is based on a simple but nontrivial solution to the marginal problem. We formulate two non-linear constraints on the set of locally consistent marginal probabilities that simultaneously (i) ensure the existence of a consistent global probability distribution and (ii) lead to an exact expression for the maximum global entropy.
\end{abstract}                             

\vspace*{\fill}
\emph{In memory of David Poulin.}

\newpage
\tableofcontents

\vspace*{\fill}
\textit{``Three blind men were shown an elephant. They touched it with their hands to determine what the creature was. The first man felt the trunk, and claimed the elephant was like a snake. The second man touched its leg and claimed the elephant was like a tree. The third man touched its tail, and claimed that the elephant was like a slender rope.'' I nodded. ``Oh, I get it. All of them were right. All of them were wrong. They couldn't get the whole picture.'' Shirk nodded. ``Precisely. I am just another blind man. I do not get the whole picture of what transpires in all places. I am blind and limited. I would be a fool to think myself wise. And so, not knowing what the universe means, I can only try to be responsible with the knowledge, the strength, and the time given to me.”}

-- Jim Butcher, Death Masks

\newpage

\section{Introduction}
One of the fundamental tasks in physics is the calculation of partition function. Naively, this involves a summation over all possible configurations of the system, incurring a computational cost that scales exponentially with the system size. A popular alternative is to use unbiased sampling-based methods such as Markov chain Monte Carlo~\cite{Hastings1970}. While such methods do work, the sample cost can be still significant, especially if the mixing time is long.

Barring the special case of one-dimensional systems, which are amenable to the transfer matrix method~\cite{Kramers1941,Kramers1941a}, an alternative is to use approximate approaches such as mean-field theory or cluster-based methods~\cite{Bethe1935,Kikuchi1951,Yedidia2005}. While these methods are faster, they come with a price: systematic error. For example, mean-field theory becomes a valid description only in infinite-dimensional systems. In particular, making such an approximation systematically excludes systems with nontrivial spatial correlation. In the cluster-based methods~\cite{Bethe1935,Kikuchi1951}, one approximates the entropy of an ensemble by a linear combination of entropies of some marginal probabilities. For instance, the entropy of the system described in Fig.~\ref{fig:example} would be approximated by
\begin{equation*}
    H(1,\ldots, 9)\approx H(1245) + H(2356) + H(4578) + H(5689) - H(25) - H(45) - H(56) - H(58) + H(5),
\end{equation*}
in the spirit of the inclusion-exclusion principle, where $H(\cdots)$ is the entropy of the marginal probability distribution over the variables in the parenthesis.\footnote{This decomposition is motivated from the inclusion-exclusion principle applied to the four sets $\{1, 2, 4, 5\}$, $\{2, 3, 5, 6\}$, $\{4, 5, 7, 8 \}$, $\{5, 6, 8, 9\}$. } Decomposition like this is generally inexact. 
\begin{figure}[h]
    \centering
    \begin{tikzpicture}[scale=0.8]
    \foreach \x in {1,...,3}
    {
    \foreach \y in {0,...,2}
    {
    \pgfmathtruncatemacro{\result}{\x + \y*3}
    \emptysquare{\x*0.5 - 0.25*\y}{0.5*\y};
    \node[] () at (\x*0.5-0.25*\y+0.25, 0.5*\y+0.25) {\result};
    }
    }
    \end{tikzpicture}
    \caption{A system consisting of $9$ particles.}
    \label{fig:example}
\end{figure}
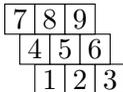

The main purpose of this paper is to introduce a new ansatz that can overcome both of these issues. We construct a family of probability distributions that can host nontrivial spatial correlations while allowing an efficient computation of energy and entropy. By minimizing the free energy within the space of such probability distributions, one can obtain a variational upper bound to the thermodynamic free energy, namely the logarithm of the partition function.

This new family has a few interesting properties. First, the probability distribution is \emph{defined} by a set of marginal probabilities. Specifically, we define our probability distribution to be the maximum-entropy probability distribution consistent with some marginals. Normally, simply specifying the marginals do not define a probability distribution because there may not exist a probability distribution consistent with the marginals. We solve this problem by introducing a special set of conditions under which one can ensure the existence of such distribution. Importantly, these conditions can be efficiently verified.

These special conditions have an intimate connection with the aforementiond inclusion-exclusion principle. Roughly speaking, the condition says that the entropy of the marginals obey the inclusion-exclusion principle \emph{internally}. Together with the condition that different marginals over the same random variables yield the same marginal probability distribution (the so called ``local consistency'' condition), the internal inclusion-exclusion principle ensures the existence of a probability distribution consistent with the given marginals. Moreover, the maximum entropy consistent with such marginals obey the inclusion-exclusion principle as well, leading to an \emph{exact} expression for the maximum entropy.

In particular, these results imply that both the energy and the maximum entropy can be decomposed into a sum of terms that can be readily computed from the given marginals. Consequently, we can compute a variational upper bound on the thermodynamic free energy in a time that scales linearly with the system size.

A construction similar to ours has appeared in a quantum setting~\cite{Kim2016}. However, that construction was extremely elaborate, limiting its practical application. Moreover, it is unclear if the entropy of the ansatz in Ref.~\cite{Kim2016} can be computed efficiently. By focusing on the classical setting, we were able to make progress on both fronts. We obtained a significantly simpler construction and moreover found an exact expression for the global entropy. Furthermore,  many of these results actually apply to the quantum setting. These developments will be discussed in our companion paper~\cite{Kim2020}.

It is also interesting to compare our work with Ref.~\cite{Wang2018}, which completely characterized translationally invariant marginals when the local variables can take a small number of possible values. While the conditions formulated in our paper does not suffer from such restrictions, we do not expect our condition to completely characterize such sets. As such, these two works are complementary to each other. Another important difference is that the conditions formulated in Ref.~\cite{Wang2018} are convex whereas our conditions are not.

The rest of this paper is structured as follows. In Section~\ref{sec:statemch}, we discuss our setup and summarize our main results. In Section~\ref{sec:marginals}, we introduce the fundamental objects of this paper, namely the marginal probabilities. In particular, we provide a concrete formulation of the translational invariance condition. In Section~\ref{sec:markovian_marginal}, we introduce an extra set of constraints that ensures the existence of a consistent global probability distribution. In Section~\ref{sec:local_extension}, we prove that the conditions imposed in Section~\ref{sec:marginals} and Section~\ref{sec:markovian_marginal} implies the existence of a global probability distribution consistent with the given marginal probabilities. In Section~\ref{sec:entropy}, we compute the exact expression for the maximum global entropy consistent with the aforementioned constraints.  We end with a discussion in Section~\ref{sec:discussion}.

\section{Summary}\label{sec:statemch}
In this section, we set up our notations and summarize our findings. We are interested in upper bounding the thermodynamic free energy density of translation-invariant statistical mechanical models in two dimensions. While our approach applies more generally, we focus on the translation-invariant case for pedagogical reasons.

To start with, we will assume that the Hamiltonian of a given model is a sum of terms each of which are supported on a ball of bounded radius. Specifically, we shall assume that each term is supported in one of the $2\times 2$ \emph{clusters}, the red shaded region in Fig.~\ref{fig:cluster0}. If not, we can always coarse-grain the system so that this condition is satisfied. 
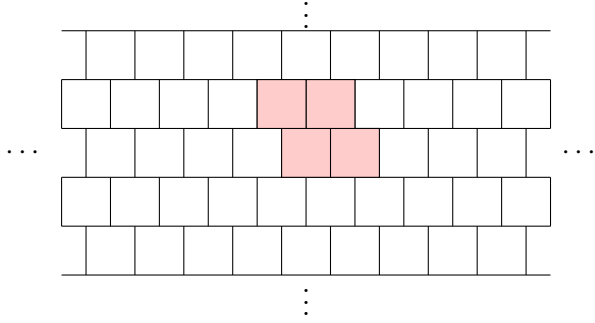
\begin{figure}[h]
    \centering
    \begin{tikzpicture}[scale=0.65, xscale=-1]
    \foreach \y in {0,...,5}
    {
    \draw[](0.25, \y + 0.25) -- (10.25, \y+0.25);
    }
    \foreach \x in {1,...,10}
    {
    \foreach \y in {1,5,9}
    {
    \draw[] (\x-0.25, \y*0.5-0.25) -- (\x-0.25, \y*0.5+0.75); 
    }
    }
    \foreach \x in {1,...,11}
    {
    \foreach \y in {3,7}
    {
    \draw[] (\x-0.75, \y*0.5-0.25) -- (\x-0.75, \y*0.5+0.75); 
    }
    }
    \draw[fill=red!20!white] (3.75, 2.25) rectangle (4.75, 3.25);
    \draw[fill=red!20!white] (4.75, 2.25) rectangle (5.75, 3.25);
    \draw[fill=red!20!white] (4.25, 3.25) rectangle (5.25, 4.25);
    \draw[fill=red!20!white] (5.25, 3.25) rectangle (6.25, 4.25);
    \node[] at (11,2.75){$\cdots$};
    \node[] at (-0.375,2.75){$\cdots$};
    \node[] at (5.25, 5.75) {$\vdots$};
    \node[] at (5.25, -0.125) {$\vdots$};
    \end{tikzpicture}
    \caption{Upon coarse-graining, every term in the Hamiltonian is contained in at least one of the $2\times 2$ clusters.}
    \label{fig:cluster0}
\end{figure}

\begin{figure}[h]
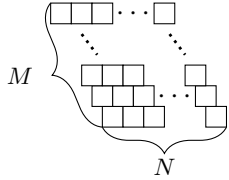

    \centering
    \clusters{$N$}{$M$}
    \caption{A $N\times M$ cluster. Our convention is to use open boundary conditions.}
    \label{fig:cluster_convention}
\end{figure}

Therefore, without loss of generality, the Hamiltonian can be written as
\begin{equation}
    H = \sum_{\miniblockfull} h_{\miniblockfull},
\end{equation}
where the summation is taken over every $2\times 2$ cluster with an appropriate boundary condition; see Fig.~\ref{fig:cluster_convention}. Assuming translation invariance of the marginal probabilities, the energy per cluster reads
\begin{equation}
    \lim_{N, M\to \infty}\frac{E}{NM} = \mathbb{E}_{P[\miniblockfull]}[h_{\miniblockfull}],
\end{equation}
where the $\mathbb{E}_{P[\miniblockfull]}[\cdots]$ is the expectation value taken over the marginal probability distribution over a cluster.

The thermodynamic free energy of a system is
\begin{equation}
    F_{N,M}(\beta) = \log Z_{N,M}(\beta),
\end{equation}
where $Z_{N,M}(\beta)=\sum_{\Omega}e^{-\beta E[\Omega]}$ is the partition function over the $N\times M$ system, where the summation is taken over a set of configurations $\{\Omega \}$. The following variational expression will be useful:
\begin{equation}
    F_{N,M}(\beta) = \min_{P[\Omega]}\left( \mathbb{E}_{P[\Omega]}[E[\Omega]] - TH(P[\Omega])\right), \label{eq:free_energy_variational}
\end{equation}
where $P[\Omega]$ is a probability distribution over $\Omega$, $\mathbb{E}_{P[\Omega]}$ is an expectation value over $P[\Omega]$, $T=1/\beta$, and $H(P[\Omega])$ is the Shannon entropy of $P[\Omega]$.

In this paper, we will be more interested in the \emph{thermodynamic free energy density}, defined as 
\begin{equation}
    f(\beta) := \lim_{N,M\to \infty} \frac{\log Z_{N,M}(\beta)}{NM}.\label{eq:variational}
\end{equation}
By restricting to a translationally invariant ansatz, we obtain
\begin{equation}
    f(\beta)= \lim_{N,M\to \infty}\left(\min_{P[\Omega] \stackrel{c}{=} P[\miniblockfull]} \left(\mathbb{E}_{P[\miniblockfull]}[h_{\miniblockfull}] - T  \frac{H(P[\Omega])}{NM} \right)\right), 
\end{equation}
where $P[\Omega] \stackrel{c}{=} P[\miniblockfull]$ means that the $P[\Omega]$ on every $2\times 2$ cluster is equal to $P[\miniblockfull]$.

There are two difficulties in using Eq.~\eqref{eq:variational}. First, ensuring $P[\Omega] \stackrel{c}{=} P[\miniblockfull]$ is known to be a hard problem.\footnote{Generally speaking, given a set of marginal probability distributions, deciding the existence of a global probability distribution consistent with those marginals is NP-hard; an efficient solution to this problem leads to an efficient algorithm for solving the $3$-coloring problem, which is NP-complete.} Second, the entropy density generally does not have a simple expression. In this paper, we will introduce a special distribution for which both of these problems can be resolved. 

\subsection{Main results}
Our main result, roughly speaking, concerns a (partial) characterization of marginal probability distributions that are compatible with a translationally invariant global state. Specifically, we formulate a marginal probability $P[\miniblockfull]$ over $2\times 2$ clusters with the conditions described below. Let $t_x$ and $t_y$ be translations in $x$- and $y$-direction by a unit spacing.\footnote{More specifically, the $y$-translation is in the direction of $(-\frac{1}{2}, \frac{\sqrt{3}}{2})$ in the convention of Fig.~\ref{fig:cluster0}} We consider the marginals that obey the following constraints:
\begin{equation}
\boxed{
    \Prob{\centeredTikZ{ 
    \emptysquare{0}{0};
    \emptysquare{0.5}{0};
    \emptysquare{0.25}{0.5};
    \emptysquare{0.75}{0.5};
    }} \text{ such that }
    \begin{cases}
    \Prob{
    \centeredTikZ
    {
        \emptysquare{0}{0};
        \filledsquare{0.5}{0};
        \emptysquare{0.25}{0.5};
        \filledsquare{0.75}{0.5};
    }
    }=
    t_x\left(
    \Prob{
    \centeredTikZ
    {
        \filledsquare{0}{0};
        \emptysquare{0.5}{0};
        \filledsquare{0.25}{0.5};
        \emptysquare{0.75}{0.5};
    }
    }\right)\\[5pt]
    \Prob{
    \centeredTikZ
    {
        \filledsquare{0}{0};
        \filledsquare{0.5}{0};
        \emptysquare{0.25}{0.5};
        \emptysquare{0.75}{0.5};
    }
    }
    =
    t_y\left(\Prob{
    \centeredTikZ
    {
        \emptysquare{0}{0};
        \emptysquare{0.5}{0};
        \filledsquare{0.25}{0.5};
        \filledsquare{0.75}{0.5};
    }
    }\right)
    \\[5pt]
    H\left(\Prob{
    \centeredTikZ
    {
        \emptysquare{0}{0};
        \emptysquare{0.25}{0.5};
    }
    } \right)
    +
    H\left(\Prob{
    \centeredTikZ
    {
        \emptysquare{0}{0};
        \emptysquare{0.5}{0};
    }
    } \right)
    -
    H\left(\Prob{
    \centeredTikZ
    {
        \emptysquare{0}{0};
    }
    } \right)
    -
    H\left(\Prob{
    \centeredTikZ
    {
        \emptysquare{0}{0};
        \emptysquare{0.5}{0};
        \emptysquare{0.75}{0.5};
    }
    } \right) = 0 \\[5pt]
     H\left(\Prob{
    \centeredTikZ
    {
        \emptysquare{0}{0};
        \emptysquare{0.25}{0.5};
    }
    } \right)
    +
    H\left(\Prob{
    \centeredTikZ
    {
        \emptysquare{0}{0};
        \emptysquare{0.5}{0};
    }
    } \right)
    -
    H\left(\Prob{
    \centeredTikZ
    {
        \emptysquare{0}{0};
    }
    } \right)
    -
    H\left(\Prob{
    \centeredTikZ
    {
        \emptysquare{0}{0};
        \emptysquare{0.25}{0.5};
        \emptysquare{0.75}{0.5};
    }
    } \right) = 0
    \end{cases}
}\label{eq:constraints_all}
\end{equation}
where the gray clusters mean that the random variables in those clusters are being summed over. It may appear that there is an ambiguity in which marginals we are referring to in the entropies. This is actually unimportant because the translation invariance condition removes such ambiguities. 

First, Eq.~\eqref{eq:constraints_all} implies the existence of a global probability distribution on a larger system which is consistent with $P[\miniblockfull]$.
\begin{restatable}[]{theorem}{theoremone}
For every $P[\miniblockfull]$ that satisfies Eq.~\eqref{eq:constraints_all}, for every integer $N, M\geq 2$, there is a probability distribution over $N\times M$ cluster (see Fig.~\ref{fig:cluster_convention}) that is consistent with $P[\miniblockfull]$ on every $2\times 2$ cluster.\label{thm:main1}
\end{restatable}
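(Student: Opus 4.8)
The plan is to build the global distribution on an $N\times M$ cluster explicitly, by a sequential "growth" procedure in which one row (or column) of $2\times 2$ clusters is adjoined at a time, and at each step the conditional independence structure encoded by the two entropic constraints in Eq.~\eqref{eq:constraints_all} guarantees that the extension exists and is consistent. The starting observation is that a vanishing expression of the form $H(AB)+H(BC)-H(B)-H(ABC)=0$ is exactly the statement that the conditional mutual information $I(A:C\mid B)=0$, i.e. $A$ and $C$ are conditionally independent given $B$ in the relevant marginal; equivalently, the marginal on $ABC$ is the Markov chain $A - B - C$ and is \emph{uniquely} reconstructed as $P[ABC] = P[AB]\,P[C\mid B]$. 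So the two entropic conditions say precisely that the $2\times 2$ marginal is a Markov chain in two "directions" — along the bottom edge conditioned appropriately, and along the slanted edge — and hence the $2\times2$ block itself is the maximum-entropy (Markov) extension of its overlapping $3$-cell and $2$-cell marginals.

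The construction then proceeds as follows. First I would fix an ordering of the $2\times 2$ clusters that tile the $N\times M$ region (say row by row, left to right), and define a candidate global distribution as a product of conditional probabilities, one factor per cluster, each factor being the conditional of $P[\miniblockfull]$ given the cells it shares with already-placed clusters. For this to be well-defined and consistent one needs: (i) the overlap between a newly added cluster and the union of previous clusters is always a region on which $P[\miniblockfull]$ induces a well-defined marginal (this is where the translation-invariance/local-consistency bookkeeping from Section~\ref{sec:marginals} is used — different clusters sharing the same cells agree); and (ii) conditioning $P[\miniblockfull]$ on that overlap and then restricting back to the overlap reproduces the right marginal, which is where the Markov (conditional-independence) constraints are essential: they ensure that the "new" cells of the incoming cluster are conditionally independent, given the shared cells, of everything placed before, so that no inconsistency propagates. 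Concretely, when a cluster is added sharing an L-shaped or edge-shaped region with its left and bottom neighbors, the constraint $I(\text{new cells}:\text{old cells}\mid \text{shared cells})=0$ inside $P[\miniblockfull]$ is exactly the local input, and a standard argument (conditional independence is preserved under taking the already-verified global Markov structure) promotes it to the global statement. I would organize this as an induction on the number of placed clusters, with the inductive hypothesis being that the partial distribution is consistent with $P[\miniblockfull]$ on every $2\times2$ sub-cluster already covered and has the appropriate global Markov/tree structure on its cell set.

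The main obstacle I anticipate is the \emph{geometry of the overlaps}: a $2\times 2$ cluster added in the "bulk" shares cells with both its left neighbor and its bottom neighbor, and those shared cells are not a single contiguous conditioning set but an L-shaped configuration, so one must check that the two entropic constraints (one essentially "horizontal", one "vertical/slanted") combine to give conditional independence of the incoming cells from \emph{all} previously placed cells given this L-shaped boundary — not merely from the two neighboring clusters. This requires a gluing/transitivity lemma for conditional independence ("if $A-B-C$ and $(AB)-C-D$ then $A-(BC)-D$", and its iterations), together with care that the conditioning sets line up with the marginals that Eq.~\eqref{eq:constraints_all} actually controls. I would handle it by choosing the sweep order so that at every step the incoming cluster's overlap with the past is exactly the overlap with its immediate left and bottom neighbors (a "staircase" frontier), reducing the bulk case to finitely many local configurations, each checked directly from the two constraints plus local consistency. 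The boundary clusters (corners and edges of the $N\times M$ region) are then easy special cases with smaller overlaps. Finally, consistency on \emph{every} $2\times2$ cluster — including ones straddling the frontier at intermediate stages — follows because each such cluster is fully contained in the union of at most a $2\times 2$ block of placed clusters, on which the reconstructed distribution restricts, by the Markov structure, to the unique Markov extension, which is $P[\miniblockfull]$ itself by the two entropic constraints.
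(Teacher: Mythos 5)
Your construction is, at bottom, the same object as the paper's: if you carry out your raster-order product of conditionals and use the internal Markov structure of $P[\miniblockfull]$ to factor each L-shaped overlap marginal as $P[\text{vertical pair}]\,P[\text{horizontal pair}]\,/\,P[\text{single cell}]$, you recover exactly the explicit ansatz of Eq.~\eqref{eq:consistent_global_state}, a ratio of products of $2\times 2$, single-cell, vertical-pair and horizontal-pair marginals. You also correctly read the two entropic conditions in Eq.~\eqref{eq:constraints_all} as vanishing conditional mutual informations, which is indeed the engine of the whole argument.

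The step you defer, however, is the theorem. The ``standard argument'' that a sweep-ordered product of conditionals has the prescribed marginals is valid when the clusters form a junction tree satisfying the running intersection property; here the intersection graph of the $2\times 2$ clusters is loopy (four mutually overlapping clusters around every interior cell), which is exactly why the marginal problem is hard in general. In particular, conditioning each incoming cluster on the cells it shares with its left and bottom neighbours does not by itself guarantee that the partial distribution's marginal on the L-shaped overlap --- a three-cell region not contained in any single previously placed cluster --- agrees with the corresponding marginal of $P[\miniblockfull]$; and until that agreement is established you are not entitled to apply the local conditional-independence statements to the global object, so the argument is circular as written. The paper closes this loop not by a frontier-growth induction but by a row-reduction identity (Theorem~\ref{thm:reduction}), whose essential ingredient is Lemma~\ref{lemma:snake_lemma}: a one-row ``snake'' of overlapping $2\times 2$ clusters remains a Markov chain after an entire row of cells is summed out. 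That lemma is where Eqs.~\eqref{constraint:markov1}--\eqref{constraint:markov2} are actually consumed, and it is a computation, not a formal consequence of the semi-graphoid calculus you invoke. You name the right obstacle (the L-shaped conditioning sets and the need for a gluing lemma) but do not supply the gluing lemma; supplying it is essentially equivalent to proving the snake lemma, so the proposal as it stands has a genuine gap at its central step.
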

\noindent
Here, we say a probability distribution $P[\Omega]$ is consistent with $P[\miniblockfull]$ if the marginal distribution of $P[\Omega]$ over every $2\times 2$ cluster is equal to $P[\miniblockfull]$.

Our second result concerns the maximum entropy over the family of distributions discussed in Theorem~\ref{thm:main1}.
\begin{restatable}[]{theorem}{theoremtwo}
Consider a family of probability distributions over $N \times M$ clusters (see Fig.~\ref{fig:cluster_convention}) which are consistent with the $P[\miniblockfull]$ obeying Eq.~\eqref{eq:constraints_all}. The maximum entropy within this family of distributions is equal to
\begin{equation}
\begin{aligned}
   &(N-1)(M-1)H\left(\Prob{\centeredTikZ{
    \foreach \x in {0, 1}
    {
    \foreach \y in {0, 1}
    {
    \emptysquare{\x*0.5+\y*0.25}{\y*0.5};
    }
    }
    }} \right) + 
    (N-2)(M-2)H\left( 
    \Prob{\centeredTikZ{
    \foreach \x in {0}
    {
    \foreach \y in {0}
    {
    \emptysquare{\x*0.5+\y*0.25}{\y*0.5};
    }
    }
    }}
    \right) \\
    &-\left(
    (N-2)(M-1)
    H\left( 
    \Prob{\centeredTikZ{
    \foreach \x in {0}
    {
    \foreach \y in {0,1}
    {
    \emptysquare{\x*0.5+\y*0.25}{\y*0.5};
    }
    }
    }}\right)
    + (N-1)(M-2)
    H\left( 
    \Prob{\centeredTikZ{
    \foreach \x in {1,0}
    {
    \foreach \y in {0}
    {
    \emptysquare{\x*0.5+\y*0.25}{\y*0.5};
    }
    }
    }}
    \right)\right).
\end{aligned}
\end{equation}
\label{thm:main2}
\end{restatable}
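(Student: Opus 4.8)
Fix $N,M\ge 2$ and enumerate the $NM$ cells of the $N\times M$ cluster in a raster order: process the rows from top to bottom, and within each row from left to right. Write $a := H(\Prob{\miniblockfull})$, $b := H(\Prob{\miniblockone})$, $v := H(\Prob{\miniblocktwovertical})$, $w := H(\Prob{\miniblocktwohorizontal})$, $g := H(\Prob{\miniblockthreefirst})$ for the entropies of the $2\times2$ cluster marginal and of its one-cell, vertical-domino, horizontal-domino, and L-shaped three-cell marginals (all unambiguous by the first two lines of Eq.~\eqref{eq:constraints_all}); the fourth line of Eq.~\eqref{eq:constraints_all} is exactly the identity $g = v + w - b$. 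I will prove the two matching bounds: every consistent distribution has entropy at most the displayed expression, and the expression is attained.

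\textbf{Upper bound.} Let $Q$ be any distribution on the $N\times M$ cluster with all $2\times 2$ marginals equal to $P[\miniblockfull]$, and expand $H(Q)$ by the chain rule along the raster order. If a cell lies in neither the top row nor the leftmost column, its West, North, and North-West neighbours all precede it and, together with it, form a $2\times2$ cluster in which it is the South-East corner; discarding all other predecessors (conditioning cannot increase entropy) and using consistency of $Q$ on that cluster gives a per-cell bound $a - g = a - v - w + b$. A non-corner top-row cell, keeping only its West neighbour, is bounded by $w - b$; a non-corner leftmost-column cell, keeping only its North neighbour, by $v - b$; the corner contributes $b$. With $(N-1)(M-1)$, $N-1$, $M-1$, and $1$ cells of these four types,
\begin{equation*}
H(Q)\ \le\ b + (N-1)(w-b) + (M-1)(v-b) + (N-1)(M-1)\bigl(a - v - w + b\bigr),
\end{equation*}
and collecting the coefficients of $a,b,v,w$ turns the right-hand side into $(N-1)(M-1)a + (N-2)(M-2)b - (N-2)(M-1)v - (N-1)(M-2)w$, i.e.\ the expression in the statement.

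\textbf{Lower bound.} Let $P^\ast$ be the distribution defined by the same raster order with conditionals: the North-West corner gets $P[\miniblockone]$; each remaining top-row cell is drawn from $P[\miniblocktwohorizontal]$ conditioned on its West neighbour; each remaining leftmost-column cell from $P[\miniblocktwovertical]$ conditioned on its North neighbour; and each remaining cell from $P[\miniblockfull]$ conditioned on its West, North and North-West neighbours. One shows that $P^\ast$ is consistent with $P[\miniblockfull]$ on every $2\times2$ cluster; this is the statement of Theorem~\ref{thm:main1} (or a minor refinement of its proof), and is precisely where the nonlinear constraints of Eq.~\eqref{eq:constraints_all} are used — the fourth line asserts that the triple $\miniblockthreefirst$ of a cluster is a Markov chain under $P[\miniblockfull]$, which is what makes the marginal-propagation induction along the raster order close, while the first two lines handle the top-row and leftmost-column boundaries. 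Granting consistency, every inequality in the upper-bound computation is an equality for $P^\ast$, since its conditional at each cell is by construction a function only of the retained neighbours and equals the corresponding conditional of $P[\miniblockfull]$. Hence $H(P^\ast)$ equals the displayed expression, so the maximum is attained and equal to it.

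\textbf{Main obstacle.} The only genuinely delicate step is the consistency of $P^\ast$ — equivalently, carrying out the raster-order induction that shows the joint law of the already-placed cells of any $2\times 2$ cluster is the prescribed marginal at the moment the cluster is completed; this is exactly where the Markov-chain structure encoded by Eq.~\eqref{eq:constraints_all} is indispensable, and it is also the content of Theorem~\ref{thm:main1}, so if one is willing to cite that theorem's proof this reduces to a bookkeeping exercise. The rest — the chain-rule bound, ``conditioning reduces entropy'', and the coefficient arithmetic — is routine. The one non-obvious modelling choice is the \emph{direction} of the raster scan: it must be chosen so that the three-cell conditioning set at an interior step is a triple whose entropy a constraint pins down (a reversed scan would instead produce the triple $\miniblockthreesecond$, governed by the third line of Eq.~\eqref{eq:constraints_all}); either admissible choice yields the same closed-form answer.
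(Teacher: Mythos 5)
Your proof is correct and follows essentially the same route as the paper: the attaining distribution $P^\ast$ you build by raster-order conditionals is exactly the paper's explicit ansatz of Eq.~\eqref{eq:consistent_global_state} rewritten in conditional form (using the Markov factorization of the L-shaped marginal), and your cell-by-cell ``chain rule plus conditioning reduces entropy'' upper bound is the same strong-subadditivity argument the paper organizes row-by-row in Eqs.~\eqref{eq:med_initial}--\eqref{eq:med_final}. The only substantive dependency, consistency of $P^\ast$ with $P[\miniblockfull]$, is correctly delegated to Theorem~\ref{thm:main1}, just as in the paper.
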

\noindent
Let us emphasize that the expression in Theorem~\ref{thm:main2} is achievable by some probability distribution. This will be important for obtaining an upper bound on the thermodynamic free energy.

Let us discuss the significance of these results. First, Theorem~\ref{thm:main1} provides a positive solution to the marginal problem. Marginal problem asks, given a set of marginal probabilities, whether there is a global probability distribution that is consistent with the given marginals. While many \emph{necessary conditions} are known~\cite{Fritz2013}, less is known about sufficient conditions. Our solution is a nontrivial example, because the distribution that is consistent with $P[\miniblockfull]$ may not be a product distribution. Here is a concrete example. Let us associate a binary random variable over $\{0, 1\}$ to each cluster and let the marginal probabilities to be $\frac{1}{2}$ for the event $0000$ and $\frac{1}{2}$ for the event $1111$. Clearly, this marginal probability distribution is not a product distribution over the four random variables. Nevertheless, one can easily verify that Eq.~\eqref{eq:constraints_all} is satisfied.

Second, Theorem~\ref{thm:main1} and Theorem~\ref{thm:main2} together implies that we can obtain an interesting upper bound to the thermodynamic free energy. Instead of minimizing over all distributions in Eq.~\eqref{eq:variational}, we can minimize over a class of probability distributions obeying Eq.~\eqref{eq:constraints_all}. Moreover, for each choice of marginals, since we have to maximize the entropy, we can use the expression derived in Theorem~\ref{thm:main2}. Therefore, we obtain a \emph{locally computable} upper bound to the thermodynamic free energy density. In the thermodynamic limit, this becomes
\begin{equation}
    \lim_{N,M\to \infty} \frac{F_{N,M}(\beta)}{NM} \leq 
    \min_{\substack{P[\miniblockfull]\text{ obeying}\\ \mathcal{C}_L,\text{ } \mathcal{C}_M }}
    \widetilde{f}_{\beta}\left(\Prob{\centeredTikZ{
    \foreach \x in {0, 1}
    {
    \foreach \y in {0, 1}
    {
    \emptysquare{\x*0.5+\y*0.25}{\y*0.5};
    }
    }
    }} \right),
\end{equation}
where 
\begin{equation}
    \widetilde{f}_{\beta} \left(\Prob{\centeredTikZ{
    \foreach \x in {0, 1}
    {
    \foreach \y in {0, 1}
    {
    \emptysquare{\x*0.5+\y*0.25}{\y*0.5};
    }
    }
    }} \right)=
     \mathbb{E}_{P[\miniblockfull]}[h_{\miniblockfull}] - \frac{1}{\beta}\left(H\left(\Prob{\centeredTikZ{
    \foreach \x in {0, 1}
    {
    \foreach \y in {0, 1}
    {
    \emptysquare{\x*0.5+\y*0.25}{\y*0.5};
    }
    }
    }} \right)-
    H\left(\Prob{\centeredTikZ{
    \foreach \x in {0, 1}
    {
    \foreach \y in {1}
    {
    \emptysquare{\x*0.5+\y*0.25}{\y*0.5};
    }
    }
    \emptysquare{0.0}{0.0};
    }
    } \right) \right).
\end{equation}

Physically, Theorem~\ref{thm:main2} means that the inclusion-exclusion principle for the entropy has a ``bootstrapping nature.'' Specifically, Eq.~\eqref{eq:constraints_all} means that the entropy of a given marginal probability must be decomposable into a linear combination of entropies of its own marginals, obeying the inclusion-exclusion principle.  Theorem~\ref{thm:main2} says that this principle applies to the maximum entropy consistent with these marginals on a much larger system. Remarkably, to arrive at this conclusion, all we needed was Eq.~\eqref{eq:constraints_all}, without invoking any global assumption.  

\section{Marginals} \label{sec:marginals}
In this section, we discuss the fundamental objects behind our construction. These are the marginal probabilities over $2\times 2$ clusters:
\begin{equation}
    \mathcal{M} := \left\{\Prob{
    \centeredTikZ
    {
        \emptysquare{0}{0};
        \emptysquare{0.5}{0};
        \emptysquare{0.25}{0.5};
        \emptysquare{0.75}{0.5};
    }
    } \right\}.
\end{equation}
Because of translational invariance, this set contains only one element.

Moreover, translational invariance imposes an extra set of linear constraints. We have summarized these constraints below, together with the ``obvious'' constraints, namely the fact that the probabilities must be nonnegative and sum to $1$.
\begin{numcases}{\mathcal{C}_L:}
    \Prob{
    \centeredTikZ
    {
        \emptysquare{0}{0};
        \emptysquare{0.5}{0};
        \emptysquare{0.25}{0.5};
        \emptysquare{0.75}{0.5};
    }
    } \geq 0 \label{constraint:positivity}\\
    \Prob{
    \centeredTikZ
    {
        \filledsquare{0}{0};
        \filledsquare{0.5}{0};
        \filledsquare{0.25}{0.5};
        \filledsquare{0.75}{0.5};
    }
    }=1\label{constraint:normalization}\\
    \Prob{
    \centeredTikZ
    {
        \emptysquare{0}{0};
        \emptysquare{0.5}{0};
        \filledsquare{0.25}{0.5};
        \filledsquare{0.75}{0.5};
    }
    }=
    \Prob{
    \centeredTikZ
    {
        \filledsquare{0}{0};
        \filledsquare{0.5}{0};
        \emptysquare{0.25}{0.5};
        \emptysquare{0.75}{0.5};
    }
    }\label{constraint:horizontal} \\
    \Prob{
    \centeredTikZ
    {
        \emptysquare{0}{0};
        \filledsquare{0.5}{0};
        \emptysquare{0.25}{0.5};
        \filledsquare{0.75}{0.5};
    }
    }=
    \Prob{
    \centeredTikZ
    {
        \filledsquare{0}{0};
        \emptysquare{0.5}{0};
        \filledsquare{0.25}{0.5};
        \emptysquare{0.75}{0.5};
    }
    }\label{constraint:vertical}
\end{numcases}
As before, the shaded cluster means that we are summing over all the configurations in that cluster. 

The first condition is the nonnegativitiy of the probability amplitudes and the second condition is the normalization of the probability distribution. The third and the fourth condition represent the translational invariance constraint, in the vertical and horizontal direction respectively.

As a side note, we emphasize that it is quite simple to relax the translational invariance condition. To do so, it suffices to replace the cluster appearing on the right-hand-side of Eq.~\eqref{constraint:horizontal} to a $2\times 2$ cluster shifted in the $y$-direction by $-1$. Also, the cluster appearing on the right-hand-side of Eq.~\eqref{constraint:vertical} can be replaced by a $2\times 2$ cluster shifted in the $x$-direction by $-1$.

An important consequence of Eq.~\eqref{constraint:horizontal} and Eq.~\eqref{constraint:vertical} is that the marginal probabilities over every cluster is identical. To see why, let us denote the variables from left to right, from bottom to top, as $X, Y, Z,$ and $W$. Eq.~\eqref{constraint:horizontal} implies
\begin{equation}
\begin{aligned}
    P[X] &= P[Z], \\
    P[Y] &= P[W].
\end{aligned}
\end{equation}
and Eq.~\eqref{constraint:vertical} implies
\begin{equation}
\begin{aligned}
    P[X] &= P[Y], \\
    P[Z] &= P[W].
\end{aligned}
\end{equation}
Therefore, 
\begin{equation}
    P[X] = P[Y] = P[Z] = P[W].
\end{equation}

\section{Markovian marginals}\label{sec:markovian_marginal}
Now, we will impose extra \emph{non-linear} constraints on the marginals. Such constraints were motivated from the study of quantum many-body systems~\cite{Kim2016}, but our analysis here is significantly simplified. 

We will impose the following two constraints on the marginals:
\begin{numcases}{\mathcal{C}_M:}
    H\left(\Prob{
    \centeredTikZ
    {
        \emptysquare{0}{0};
        \emptysquare{0.25}{0.5};
    }
    } \right)
    +
    H\left(\Prob{
    \centeredTikZ
    {
        \emptysquare{0}{0};
        \emptysquare{0.5}{0};
    }
    } \right)
    -
    H\left(\Prob{
    \centeredTikZ
    {
        \emptysquare{0}{0};
    }
    } \right)
    -
    H\left(\Prob{
    \centeredTikZ
    {
        \emptysquare{0}{0};
        \emptysquare{0.5}{0};
        \emptysquare{0.75}{0.5};
    }
    } \right) = 0, \label{constraint:markov1} \\
     H\left(\Prob{
    \centeredTikZ
    {
        \emptysquare{0}{0};
        \emptysquare{0.25}{0.5};
    }
    } \right)
    +
    H\left(\Prob{
    \centeredTikZ
    {
        \emptysquare{0}{0};
        \emptysquare{0.5}{0};
    }
    } \right)
    -
    H\left(\Prob{
    \centeredTikZ
    {
        \emptysquare{0}{0};
    }
    } \right)
    -
    H\left(\Prob{
    \centeredTikZ
    {
        \emptysquare{0}{0};
        \emptysquare{0.25}{0.5};
        \emptysquare{0.75}{0.5};
    }
    } \right) = 0, \label{constraint:markov2}
\end{numcases}
where the specified marginal probabilities can be obtained straightforwardly from the marginal over the $2\times 2$ cluster by summing over the appropriate set of variables.

In fact, we can reduce this to a single constraint. Because both linear combinations of entropies must be nonnegative,\footnote{For any probability distribution over three variables $X,Y,$ and $Z$, one can show that $S(P[XY]) + S(P[YZ]) - S(P[Y]) - S(P[XYZ])\geq 0$.} the following condition implies both Eq.~\eqref{constraint:markov1} and Eq.~\eqref{constraint:markov2}.
\begin{equation}
2H\left(\Prob{
    \centeredTikZ
    {
        \emptysquare{0}{0};
        \emptysquare{0.25}{0.5};
    }
    } \right)
    +
    2H\left(\Prob{
    \centeredTikZ
    {
        \emptysquare{0}{0};
        \emptysquare{0.5}{0};
    }
    } \right)
    -
    2H\left(\Prob{
    \centeredTikZ
    {
        \emptysquare{0}{0};
    }
    } \right)
    -
    H\left(\Prob{
    \centeredTikZ
    {
        \emptysquare{0}{0};
        \emptysquare{0.25}{0.5};
        \emptysquare{0.75}{0.5};
    }
    } \right)
    -
    H\left(\Prob{
    \centeredTikZ
    {
        \emptysquare{0}{0};
        \emptysquare{0.5}{0};
        \emptysquare{0.75}{0.5};
    }
    } \right)=0
    \label{constraint:markov_combined}
\end{equation}

An important consequence of these constraints is that the probability distribution over these variables have an internal Markov chain structure. Specifically, 
\begin{numcases}{\mathcal{C}_M:}
    \begin{aligned} \Prob
    {
    \centeredTikZ
    {
    \emptysquare{0}{0};
    \emptysquare{0.5}{0};
    \emptysquare{0.75}{0.5};
    }
    }
    =
    \Prob
    {
    \centeredTikZ
    {
    \emptysquare{0}{0};
    \emptysquare{0.5}{0};
    \dottedsquare{0.75}{0.5};
    }
    }
    \Prob
    {
    \centeredTikZ
    {
    \dottedsquare{0}{0};
    \emptysquare{0.5}{0};
    \emptysquare{0.75}{0.5};
    }
    }
    \Prob
    {
    \centeredTikZ
    {
    \dottedsquare{0}{0};
    \emptysquare{0.5}{0};
    \dottedsquare{0.75}{0.5};
    }
    }^{-1}, \label{constraint:markov_alternative1} \\
    \Prob
    {
    \centeredTikZ
    {
    \emptysquare{0}{0};
    \emptysquare{0.5}{0};
    \emptysquare{-0.25}{-0.5};
    }
    }
    =
    \Prob
    {
    \centeredTikZ
    {
    \emptysquare{0}{0};
    \dottedsquare{0.5}{0};
    \emptysquare{-0.25}{-0.5};
    }
    }
    \Prob
    {
    \centeredTikZ
    {
    \emptysquare{0}{0};
    \emptysquare{0.5}{0};
    \dottedsquare{-0.25}{-0.5};
    }
    }
    \Prob
    {
    \centeredTikZ
    {
    \emptysquare{0}{0};
    \dottedsquare{0.5}{0};
    \dottedsquare{-0.25}{-0.5};
    }
    }^{-1},\label{constraint:markov_alternative2}
    \end{aligned}
\end{numcases}
where again $P[\cdots]$ is a probability distribution over the set of white clusters in the square bracket. The dotted squares are bookkeeping devices, to specify the relative location of these clusters. 

\subsection{What do the Markovian constraints mean?}
We shall soon see that these seemingly mysterious constraints have remarkable consequences. But for now, we should ask: when do these constraints become reasonable?

To answer this question, it is helpful to imagine the thermodynamic limit, the limit in which individual cluster contains many degrees of freedom. In that setup, we can hypothesize that the entropy of the marginal distributions will obey a scaling law. Specifically, one may assume that the entropy can be decomposed as follows:
\begin{equation}
    H\left(\Prob
    {
    \centeredTikZ
    {
    \emptysquare{0}{0};
    \emptysquare{0.25}{0.5};
    \emptysquare{0.75}{0.5};
    }
    } \right) = \int_{\miniblockthreefirst} h_{\miniblockthreefirst, V} dV + \int_{\partial \left( \miniblockthreefirst \right)} h_{\miniblockthreefirst, A} dA + \ldots,
\end{equation}
where $h_{\miniblockthreefirst, V}$ and $h_{\miniblockthreefirst, A}$ represent the entropy density in the bulk and the boundary of $\miniblockthreefirst$ respectively and the ellipsis represents a subleading contribution. The first integral represents the ``bulk'' contribution and the second integral represents the ``boundary'' contribution to the entropy.

Similarly, we can express the other terms in the constraints as follows.
\begin{equation}
\begin{aligned}
    H\left(\Prob
    {
    \centeredTikZ
    {
    \emptysquare{0}{0};
    \emptysquare{0.5}{0};
    }
    } \right) &= \int_{\miniblocktwohorizontal} h_{\miniblocktwohorizontal, V} dV + \int_{\partial \left( \miniblocktwohorizontal \right)} h_{\miniblocktwohorizontal, A} dA + \ldots \\
    H\left(\Prob
    {
    \centeredTikZ
    {
    \emptysquare{0}{0};
    \emptysquare{0.25}{0.5};
    }
    } \right) &= \int_{\miniblocktwovertical} h_{\miniblocktwovertical, V} dV + \int_{\partial \left( \miniblocktwovertical \right)} h_{\miniblocktwovertical, A} dA + \ldots \\
    H\left(\Prob
    {
    \centeredTikZ
    {
    \emptysquare{0}{0};
    }
    } \right) &= \int_{\miniblockone} h_{\miniblockone, V} dV + \int_{\partial \left( \miniblockone \right)} h_{\miniblockone, A} dA + \ldots
\end{aligned}
\end{equation}

If the densities do not depend greatly on the underlying clusters, which we may expect to be the case when the clusters are sufficiently large, then these leading contributions cancel out each other. Similarly, the subleading area terms also cancel out. The remaining $\mathcal{O}(1)$ term, if they do not depend on the fine details of the geometry, should also cancel out. Therefore, under this somewhat speculative scaling law, we can conclude that Eq.~\eqref{constraint:markov1} and Eq.~\eqref{constraint:markov2} become valid approximations. 

This is the main rationale on why we impose these constraints. A remarkable thing is that these constraints actually have nontrivial implications. While $\mathcal{C}_L$ by itself does not imply that there is some global probability distribution $P[\Omega]$ consistent with the marginals, together with $\mathcal{C}_M$, we can ensure the existence of such probability distribution. Moreover, the maximum entropy consistent with the marginals obeying $\mathcal{C}_L$ and $\mathcal{C}_M$ has a closed-form expression. These results will be discussed in Section~\ref{sec:local_extension} and Section~\ref{sec:entropy}.

\section{Global consistency}
\label{sec:local_extension}

So far, we have introduced two types of constraints, which we referred to as $\mathcal{C}_L$ and $\mathcal{C}_M$. In this section, we show that these constraints imply the existence of a probability distribution over every $N\times M$ cluster (Fig.~\ref{fig:cluster_convention}) for every integer $N, M \geq 2$ such that its marginal probabilities over every $2\times 2$ cluster is identical to the given marginals. To refer to each cluster, we will use the convention described in Fig.~\ref{fig:dependency}.
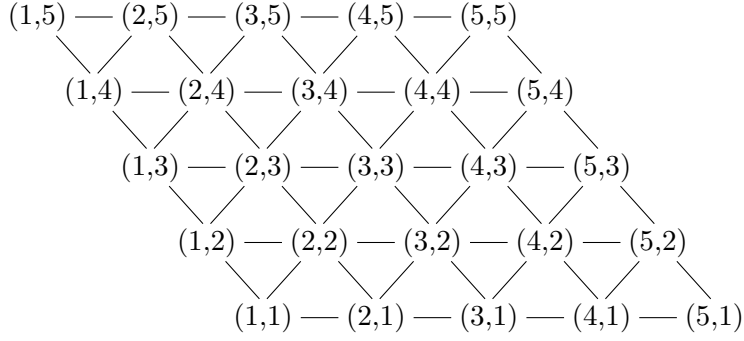
\begin{figure}[h]
    \centering
    \begin{tikzpicture}
    \foreach \x in {1, ..., 5}
    {
    \foreach \y in {1, ..., 5}
    {
    \node[] () at (1.5*\x-0.75*\y, \y) {(\x,\y)};
    }
    }
    \foreach \x in {1,...,4}
    {
    \foreach \y in {1,...,5}
    {
    \draw[-] (1.5*\x-0.75*\y + 0.5, \y) -- (1.5*\x - 0.75*\y + 1, \y);
    }
    }
    
    \foreach \x in {1,...,5}
    {
    \foreach \y in {1,...,4}
    {
    \draw[-] (1.5*\x-0.75*\y-0.05, \y+0.25) -- (1.5*\x-0.75*\y -0.5, \y+0.75);
    }
    }
    
    \foreach \x in {1,...,4}
    {
    \foreach \y in {1,...,4}
    {
    \draw[-] (1.5*\x-0.75*\y+0.05, \y+0.25) -- (1.5*\x-0.75*\y + 0.5, \y+0.75);
    }
    }

    \end{tikzpicture}
    \caption{Our convention for the clusters, for a $5\times 5$ system.}
    \label{fig:dependency}
\end{figure}

We shall label the clusters within this collection by its corners. For example, 
\begin{equation}
\Prob{
\centeredTikZ{
\foreach \x in {0,1}
    {
    \foreach \y in {0,1}
    {
    \emptysquare{0.5*\x+0.25*\y}{0.5*\y};
    }
    }
    \draw[->] (0.25, -0.5) -- (0.25, 0);
    \draw[] (0.5, -0.5)  -- (0.25,-0.5) node[pos=0, left]{$\vec{v}$};
}
}
\end{equation}
is a probability distribution of a $2\times 2$ cluster whose corner is located at $\vec{v} = (v_x, v_y)$. We shall consider the following object:
\begin{equation}
\boxed{
    \Prob{\clusters{$N$}{$M$}} := 
\frac{\displaystyle \left(\prod_{\substack{v_x \in \{2,\ldots, N \} \\ v_y \in \{1,\ldots, M-1 \}}}
\Prob{
\centeredTikZ{
\foreach \x in {0,1}
    {
    \foreach \y in {0,1}
    {
    \emptysquare{0.5*\x+0.25*\y}{0.5*\y};
    }
    }
    \draw[->] (0.25, -0.5) -- (0.25, 0);
    \draw[] (0.5, -0.5)  -- (0.25,-0.5) node[pos=0, left]{$\vec{v}$};
}
}
\right)
\left(\prod_{\substack{v_x \in \{2,\ldots, N-1 \} \\ v_y \in \{2,\ldots, M-1 \}}}
\Prob{
\centeredTikZ{
\foreach \x in {0}
    {
    \foreach \y in {0}
    {
    \emptysquare{0.5*\x+0.25*\y}{0.5*\y};
    }
    }
    \draw[->] (0.25, -0.5) -- (0.25, 0);
    \draw[] (0.5, -0.5)  -- (0.25,-0.5) node[pos=0, left]{$\vec{v}$};
}
}
\right)
}
{
\left( \displaystyle \prod_{\substack{v_x \in \{2, \ldots, N-1 \} \\ v_y \in \{1,\ldots, M-1 \}}}
\Prob{
\centeredTikZ{
\foreach \x in {0}
    {
    \foreach \y in {0,1}
    {
    \emptysquare{0.5*\x+0.25*\y}{0.5*\y};
    }
    }
    \draw[->] (0.25, -0.5) -- (0.25, 0);
    \draw[] (0.5, -0.5)  -- (0.25,-0.5) node[pos=0, left]{$\vec{v}$};
}
}\right)
\left(
\displaystyle \prod_{\substack{v_x \in \{2,\ldots, N \} \\ v_y \in \{2, \ldots, M-1 \}}}
\Prob{
\centeredTikZ{
\foreach \x in {0,1}
    {
    \foreach \y in {0}
    {
    \emptysquare{0.5*\x+0.25*\y}{0.5*\y};
    }
    }
    \draw[->] (0.25, -0.5) -- (0.25, 0);
    \draw[] (0.5, -0.5)  -- (0.25,-0.5) node[pos=0, left]{$\vec{v}$};
}
}
\right)
}.} \label{eq:consistent_global_state}
\end{equation}
\noindent
for $N, M \geq 3$. When $M=2$, we define it to be an object called \emph{snake}, defined below:
\begin{equation}
\boxed{
    \Prob{\snake{$N$}}:=
    \frac{\displaystyle
    \prod_{\substack{v_x \in \{2,\ldots, N \},\\ v_y=1}}
    \Prob{
\centeredTikZ{
\foreach \x in {0,1}
    {
    \foreach \y in {0,1}
    {
    \emptysquare{0.5*\x+0.25*\y}{0.5*\y};
    }
    }
    \draw[->] (0.25, -0.5) -- (0.25, 0);
    \draw[] (0.5, -0.5)  -- (0.25,-0.5) node[pos=0, left]{$\vec{v}$};
}
}}
{\displaystyle
\prod_{\substack{v_x \in \{2,\ldots, N-1 \},\\v_y=1}}
    \Prob{
\centeredTikZ{
\foreach \x in {0}
    {
    \foreach \y in {0,1}
    {
    \emptysquare{0.5*\x+0.25*\y}{0.5*\y};
    }
    }
    \draw[->] (0.25, -0.5) -- (0.25, 0);
    \draw[] (0.5, -0.5)  -- (0.25,-0.5) node[pos=0, left]{$\vec{v}$};
}
}
}.
}
\label{eq:horizontal_snake}
\end{equation}

It should be clear that Eq.~\eqref{eq:horizontal_snake} is consistent with $P[\miniblockfull]$. Our goal is to show that the object defined in Eq.~\eqref{eq:consistent_global_state} is a probability distribution consistent with $P[\miniblockfull]$, by showing that it reduces to a snake upon summing over an appropriate set of variables.

For that purpose, we will mainly focus on proving the following theorem.
\begin{theorem}
(Row reduction)
\begin{equation}
\begin{aligned}
    \Prob{\clustersnorth{$N$}{$M$}} &= \Prob{\clusters{$N$}{$M-1$}} \\
    \Prob{\clusterssouth{$N$}{$M$}} &= \Prob{\clusters{$N$}{$M-1$}}.
\end{aligned}
\end{equation}
\label{thm:reduction}
\end{theorem}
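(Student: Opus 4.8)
The statement to prove is Theorem~\ref{thm:reduction}, which says that the object $\Prob{\clusters{$N$}{$M$}}$ defined in Eq.~\eqref{eq:consistent_global_state} reduces, upon summing over the top row (or the bottom row) of clusters, to the same object on an $N \times (M-1)$ system. By the obvious top–bottom symmetry of the construction, it suffices to prove the first identity; the second follows by reflecting the picture in the $y$-direction. So the plan is to focus on marginalizing over the variables in the topmost row of the $N\times M$ cluster array.

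**Key steps.**

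First, I would isolate from the large product in Eq.~\eqref{eq:consistent_global_state} exactly those factors that involve the variables being summed over — namely the sites in row $v_y = M-1$ that are ``capped'' by a site in row $M$. These are precisely the $2\times 2$ cluster factors with $v_y = M-1$ and the horizontal two-site factors with $v_y = M-1$; everything with $v_y \le M-2$ (and the vertical two-site factors, which only reach up to $v_y = M-1$ at their top) passes through the sum untouched as a prefactor. Second, I would recognize that the product of these top-row factors, once the shared lower-row variables are held fixed, is itself proportional to a \emph{snake} in the horizontal direction: a product of $2\times 2$ clusters over $v_x \in \{2,\dots,N\}$ divided by a product of the vertical dominoes over $v_x \in \{2,\dots,N-1\}$, matching Eq.~\eqref{eq:horizontal_snake} verbatim. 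Third — and this is the crux — I would use the Markovian constraint to show that summing this horizontal snake over its top strip of single sites collapses it to the product of the bottom dominoes, i.e. to the appropriate row-$(M-1)$ contribution of the $N\times(M-1)$ object. Concretely, one peels off single top sites one at a time from the end of the snake; each step is an application of Eq.~\eqref{constraint:markov_alternative1} (the identity $\Prob{\miniblockthreefirst} = \Prob{\miniblocktwohorizontal}\,\Prob{\miniblockthreesecond\ \text{reflected}}\cdots$), which is exactly the conditional-independence statement $P[\text{left}\mid \text{middle},\text{right}] = P[\text{left}\mid \text{middle}]$ that lets a marginal sum telescope. Fourth, I would check the bookkeeping: after the collapse, compare the surviving prefactor times the collapsed snake against the defining product for $\Prob{\clusters{$N$}{$M-1$}}$, verifying that the index ranges $\{2,\dots,N\}\times\{1,\dots,M-2\}$, $\{2,\dots,N-1\}\times\{2,\dots,M-2\}$, etc., line up correctly — including the $M=3\to M=2$ edge case, where the target is the snake of Eq.~\eqref{eq:horizontal_snake} rather than the generic formula.

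**The main obstacle.**

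The real work is the third step: showing that a one-dimensional snake-shaped product of $2\times2$-cluster marginals, glued along shared dominoes, correctly marginalizes down one strip. The constraint $\mathcal{C}_M$ is stated as an entropy equality, so I would first invoke its equivalent multiplicative form (Eq.~\eqref{constraint:markov_alternative1}/\eqref{constraint:markov_alternative2}) — the reduction from the entropy identity to the factorization is the footnoted fact that $S(P[XY])+S(P[YZ])-S(P[Y])-S(P[XYZ]) = 0$ iff $X \perp Z \mid Y$. The delicate point is that a single application of the Markov property only decouples one site; to sum over the whole top strip I need to iterate along the snake and argue that at each stage the \emph{remaining} object is still a snake of the same form (with one fewer column), so the induction hypothesis applies. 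Making the induction invariant precise — what exactly the partially-summed snake looks like, and that each newly exposed site is conditionally independent of everything to its left given its immediate lower-left neighbor — is where care is needed; conceptually it is the standard transfer-matrix-style sweep, but writing it so the boundary columns ($v_x = 2$ and $v_x = N$) are handled correctly is the fiddly part.

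I would expect the author's proof to set up this sweep as a lemma ("a horizontal snake summed over its top strip equals its bottom domino-chain"), prove it by induction on $N$ using Eq.~\eqref{constraint:markov_alternative1}, and then invoke it inside an induction on $M$ for Theorem~\ref{thm:reduction} itself — with Theorem~\ref{thm:main1} following by iterating down to the base case $M=2$.
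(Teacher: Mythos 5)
Your plan is correct and follows essentially the same route as the paper: your steps (1)--(2) are the factorization of the $N\times M$ object into the $N\times(M-1)$ object times a snake ratio (the paper's Proposition~\ref{prop:row_reduction}), your step (3) sweep is exactly Lemma~\ref{lemma:snake_lemma}, proved as you anticipate by peeling one cluster off the end of the snake via the multiplicative (conditional-independence) form of $\mathcal{C}_M$, and the second identity follows by the same reflection/rotation symmetry you invoke. The only slip is cosmetic: the two-site denominator factors that touch the summed row are the \emph{vertical} dominoes, not horizontal ones, as your own step (2) correctly states.
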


The proof mainly follows from the fact that the snake remains to be a Markov chain after summing over the variables on a single row.
\begin{lemma}
\label{lemma:snake_lemma}
\begin{equation}
\boxed{
    \Prob{\snakehalf{N}}=\frac{\displaystyle
    \prod_{\substack{v_x \in \{2,\ldots, N \},\\ v_y=2}}
    \Prob{
\centeredTikZ{
\foreach \x in {0,1}
    {
    \foreach \y in {0}
    {
    \emptysquare{0.5*\x+0.25*\y}{0.5*\y};
    }
    }
    \draw[->] (0.25, -0.5) -- (0.25, 0);
    \draw[] (0.5, -0.5)  -- (0.25,-0.5) node[pos=0, left]{$\vec{v}$};
}
}}
{\displaystyle
\prod_{\substack{v_x \in \{2,\ldots, N-1 \},\\v_y=2}}
    \Prob{
\centeredTikZ{
\foreach \x in {0}
    {
    \foreach \y in {0}
    {
    \emptysquare{0.5*\x+0.25*\y}{0.5*\y};
    }
    }
    \draw[->] (0.25, -0.5) -- (0.25, 0);
    \draw[] (0.5, -0.5)  -- (0.25,-0.5) node[pos=0, left]{$\vec{v}$};
}
}
}}.\label{eq:snake_markov}
\end{equation}
\end{lemma}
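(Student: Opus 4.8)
The plan is to prove Eq.~\eqref{eq:snake_markov} by induction on $N$, peeling one column off the snake at each step. I would label the columns of the $2\times 2$-wide... rather, the $2\times N$ snake $1,\dots,N$ and write $a_i,b_i$ for the bottom and top cell of column $i$, so that the $2\times 2$ cluster on columns $i,i+1$ carries the marginal $P[a_ib_ia_{i+1}b_{i+1}]=P[\miniblockfull]$, the vertical domino on column $i$ carries $P[a_ib_i]$, and a horizontal top-row domino carries $P[b_ib_{i+1}]$. In this notation the snake of Eq.~\eqref{eq:horizontal_snake} is the product $\prod_{i=1}^{N-1}P[a_ib_ia_{i+1}b_{i+1}]$ divided by $\prod_{i=2}^{N-1}P[a_ib_i]$ --- the usual Markov-chain form with each column as one ``site'' --- and Eq.~\eqref{eq:snake_markov} is the statement that summing this over all bottom cells $a_1,\dots,a_N$ returns the analogous expression $\prod_{i=1}^{N-1}P[b_ib_{i+1}]$ divided by $\prod_{i=2}^{N-1}P[b_i]$, i.e.\ again a Markov chain, now on the top cells alone.

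The base case $N=2$ is immediate: $P[\mathrm{snake}(2)]=P[\miniblockfull]$, and summing over $a_1,a_2$ gives the horizontal top-row domino $P[b_1b_2]$. For the inductive step I would start from the factorization
\[
P[\mathrm{snake}(N)] = P[\mathrm{snake}(N-1)]\cdot\frac{P[a_{N-1}b_{N-1}a_Nb_N]}{P[a_{N-1}b_{N-1}]},
\]
in which the first factor depends only on columns $1,\dots,N-1$. Summing over $a_N$ touches only the second factor, and $\sum_{a_N}P[a_{N-1}b_{N-1}a_Nb_N]=P[a_{N-1}b_{N-1}b_N]$ is exactly the marginal on the cluster of columns $N-1,N$ with its bottom-right cell removed. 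That is the L-shape for which constraint \eqref{constraint:markov_alternative2} provides a factorization: conditioning on $b_{N-1}$, one gets $P[a_{N-1}b_{N-1}b_N]=P[a_{N-1}b_{N-1}]\,P[b_{N-1}b_N]/P[b_{N-1}]$, so that
\[
\sum_{a_N}\frac{P[a_{N-1}b_{N-1}a_Nb_N]}{P[a_{N-1}b_{N-1}]}=\frac{P[b_{N-1}b_N]}{P[b_{N-1}]}.
\]
Since the right-hand side involves only top cells, it comes out of the remaining sum over $a_1,\dots,a_{N-1}$; applying the inductive hypothesis to $\sum_{a_1,\dots,a_{N-1}}P[\mathrm{snake}(N-1)]$ then gives
\[
\sum_{a_1,\dots,a_N}P[\mathrm{snake}(N)]=\frac{P[b_{N-1}b_N]}{P[b_{N-1}]}\cdot\frac{\prod_{i=1}^{N-2}P[b_ib_{i+1}]}{\prod_{i=2}^{N-2}P[b_i]}=\frac{\prod_{i=1}^{N-1}P[b_ib_{i+1}]}{\prod_{i=2}^{N-1}P[b_i]},
\]
which is the claim.

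The single nontrivial step --- and the only place the hypotheses enter --- is the collapse $\sum_{a_N}P[a_{N-1}b_{N-1}a_Nb_N]/P[a_{N-1}b_{N-1}]=P[b_{N-1}b_N]/P[b_{N-1}]$. Marginalizing each site of a generic Markov chain (here, summing out the $a_i$) does not in general leave a Markov chain, and it is precisely \eqref{constraint:markov_alternative2} --- i.e.\ the conditional independence $a_i\perp b_{i+1}\mid b_i$ inside every $2\times 2$ cluster --- that forces the reduced transition factor to depend on $b_{N-1}$ only, which is what makes the induction close. The one subtlety to keep straight is orientation: $\mathcal{C}_M$ supplies factorizations only for the two L-shapes missing the bottom-right, respectively top-left, corner of a $2\times 2$ cluster, so the snake must be peeled from the end for which the summed-out bottom cell sits at the bottom-right corner of its cluster; peeling from the other end yields the ``missing bottom-left'' L-shape, for which no hypothesis is available. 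Zero-probability configurations can be handled by the usual conventions (or by assuming strict positivity and passing to a limit) and do not affect the argument.
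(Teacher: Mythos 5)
Your proof is correct and follows essentially the same route as the paper's: an induction that peels the bottom cell off the end column, reduces the last cluster to the L-shape missing its bottom-right corner, and applies the Markov factorization \eqref{constraint:markov_alternative2} so that the resulting transition factor depends only on top-row cells and the induction closes. Your explicit treatment of the base case, the cancellation of $P[a_{N-1}b_{N-1}]$, and the orientation caveat (which end may be peeled) just spell out what the paper leaves as ``readily follows.''
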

\begin{proof}
The proof is based on induction. We will set out to prove that
\begin{equation}
\Prob{\snakecornerdark{$N$}} = \Prob{\snake{$N-1$}}
\Prob{
\centeredTikZ{
\foreach \x in {0,1}
    {
    \foreach \y in {0}
    {
    \emptysquare{0.5*\x+0.25*\y}{0.5*\y};
    }
    }
    \draw[->] (0.25, -0.5) -- (0.25, 0);
    \draw[] (0.5, -0.5)  -- (0.25,-0.5) node[pos=0, left]{\tiny $(1,2)$};
}
}
\Prob{
\centeredTikZ{
\foreach \x in {0}
    {
    \foreach \y in {0}
    {
    \emptysquare{0.5*\x+0.25*\y}{0.5*\y};
    }
    }
    \draw[->] (0.25, -0.5) -- (0.25, 0);
    \draw[] (0.5, -0.5)  -- (0.25,-0.5) node[pos=0, left]{\tiny $(2,2)$};
}
}^{-1}
, \label{eq:induction}
\end{equation}
where the snake on the right hand side is defined over the set of cluster $\{v : v_x \in \{1,\ldots, N-1\}, v_y\in \{1,2\}\}$. To prove Eq.~\eqref{eq:induction}, note that
\begin{equation}
\begin{aligned}
    \Prob{
    \centeredTikZ{
    \emptysquare{0}{0};
    \filledsquare{0.5}{0};
    \emptysquare{0.25}{0.5};
    \emptysquare{0.75}{0.5};
    }
    }
    &= 
    \Prob{
    \centeredTikZ{
    \emptysquare{0}{0};
    \dottedsquare{0.5}{0};
    \emptysquare{0.25}{0.5};
    \emptysquare{0.75}{0.5};
    }
    } \\
  &=\Prob{
    \centeredTikZ{
    \dottedsquare{0}{0};
    \dottedsquare{0.5}{0};
    \emptysquare{0.25}{0.5};
    \emptysquare{0.75}{0.5};
    }
    }
    \Prob{
    \centeredTikZ{
    \emptysquare{0}{0};
    \dottedsquare{0.5}{0};
    \emptysquare{0.25}{0.5};
    \dottedsquare{0.75}{0.5};
    }
    }
    \Prob{
    \centeredTikZ{
    \dottedsquare{0}{0};
    \dottedsquare{0.5}{0};
    \emptysquare{0.25}{0.5};
    \dottedsquare{0.75}{0.5};
    }
    }^{-1}.
\end{aligned}
\end{equation}
Plugging in this identity to the definition of the snake, we obtain Eq.~\eqref{eq:induction}, from which the main claim readily follows.
\end{proof}

Next, we note the following decomposition. The proof follows from the definition of the snake and Lemma~\ref{lemma:snake_lemma}.
\begin{proposition}
\begin{equation}
\boxed{
\Prob{\clusters{$N$}{$M$}} = \Prob{\clusters{$N$}{$M-1$}} \frac{\Prob{\snake{$N$}}}{\Prob{\snakehalf{$N$}}},
}
\end{equation}
where the probability distribution over the three terms on the right hand side are over
\begin{itemize}
    \item $\{v : v_x \in \{1,N\}, v_y \in \{2, M\} \}$ for the first term,
    \item $\{v : v_x \in \{1,N\}, v_y \in \{1, 2\} \}$ for the term in the numerator,
    \item and $\{v : v_x \in \{1,N\}, v_y \in \{2\} \}$ for the tern in the denominator.
\end{itemize}
\label{prop:row_reduction}
\end{proposition}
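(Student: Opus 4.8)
Assume $N,M\geq 3$, so that the left-hand side is the expression in Eq.~\eqref{eq:consistent_global_state}; the boundary case $M=3$ (where $\Prob{\clusters{$N$}{$M-1$}}$ collapses to a snake) will come out of the same computation, so I would dispatch it as a remark at the end. The whole statement is a rearrangement of the product in Eq.~\eqref{eq:consistent_global_state} into three blocks, one of which is recognized via Lemma~\ref{lemma:snake_lemma}; there is no separate idea beyond careful indexing.

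First I would split each of the four products in Eq.~\eqref{eq:consistent_global_state} by the value of the layer index $v_y$. The two products whose exponent set runs over $v_y\in\{1,\dots,M-1\}$ — the $2\times2$-cluster marginals in the numerator and the vertical $1\times2$ marginals in the denominator — I would split off at $v_y=1$. The two products running over $v_y\in\{2,\dots,M-1\}$ — the single-site marginals (numerator) and the horizontal $2\times1$ marginals (denominator) — I would split off at $v_y=2$. The $v_y=1$ block is exactly the ratio of $2\times2$ marginals at corners $(v_x,1)$, $v_x\in\{2,\dots,N\}$, over vertical $1\times2$ marginals at corners $(v_x,1)$, $v_x\in\{2,\dots,N-1\}$, which by the definition Eq.~\eqref{eq:horizontal_snake} is $\Prob{\snake{$N$}}$ sitting on the rows $v_y\in\{1,2\}$.

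Next I would observe that the surviving factors with $v_y\geq 2$, once the split-off $v_y=2$ single-site and horizontal pieces are removed, are precisely the product of Eq.~\eqref{eq:consistent_global_state} for an $N\times(M-1)$ cluster under the shift $v_y\mapsto v_y-1$: the $2\times2$ and vertical $1\times2$ layers now run over $v_y\in\{2,\dots,M-1\}$ and the single-site and horizontal $2\times1$ layers over $v_y\in\{3,\dots,M-1\}$, which match the ranges $\{1,\dots,(M-1)-1\}$ and $\{2,\dots,(M-1)-1\}$ after the shift, with all corner ranges ($\{2,\dots,N\}$ versus $\{2,\dots,N-1\}$) preserved. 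Hence this block equals $\Prob{\clusters{$N$}{$M-1$}}$ on rows $v_y\in\{2,\dots,M\}$. The only leftover is the ratio of the $v_y=2$ single-site marginals at $v_x\in\{2,\dots,N-1\}$ over the $v_y=2$ horizontal $1\times2$ marginals at $v_x\in\{2,\dots,N\}$, which is exactly $\Prob{\snakehalf{$N$}}^{-1}$ on row $v_y=2$ by Lemma~\ref{lemma:snake_lemma} (Eq.~\eqref{eq:snake_markov}). Multiplying the three identified blocks gives $\Prob{\clusters{$N$}{$M$}}=\Prob{\snake{$N$}}\,\Prob{\clusters{$N$}{$M-1$}}\,\Prob{\snakehalf{$N$}}^{-1}$, which is the claim after reordering factors. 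For $M=3$ the ``$v_y\geq 3$'' single-site and horizontal layers are empty, so the surviving $v_y\geq 2$ block is just the $2\times2$ marginals over the vertical $1\times2$ marginals on $v_y=2$, which by Eq.~\eqref{eq:horizontal_snake} is the snake $\Prob{\clusters{$N$}{2}}$ on rows $v_y\in\{2,3\}$; the rest is unchanged.

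\textbf{Main obstacle.} There is no conceptual obstacle: the only genuine input is Lemma~\ref{lemma:snake_lemma}, already established. The part that needs attention is purely the index bookkeeping — checking that the $v_y\geq 2$ factors reassemble into the $N\times(M-1)$ expression under the shift $v_y\mapsto v_y-1$, and tracking which corner ranges attach to which marginal shape. One should also note that the equality is between the rational expressions in the marginals as written, so it holds verbatim wherever those marginals are positive (and by continuity on the closure); positivity is not needed for the algebraic manipulation itself.
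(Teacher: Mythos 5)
Your proposal is correct and is exactly the argument the paper intends: the paper's own "proof" of Proposition~\ref{prop:row_reduction} is the one-line remark that it follows from the definition of the snake and Lemma~\ref{lemma:snake_lemma}, and your layer-by-layer regrouping of the four products in Eq.~\eqref{eq:consistent_global_state} — peeling off the $v_y=1$ factors as $\Prob{\snake{$N$}}$, the leftover $v_y=2$ single-site/horizontal ratio as $\Prob{\snakehalf{$N$}}^{-1}$ via Lemma~\ref{lemma:snake_lemma}, and recognizing the remainder as the shifted $N\times(M-1)$ expression — is precisely the bookkeeping that remark elides. The index ranges you report all check out, so nothing further is needed.
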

\noindent
By summing over the variables in the clusters on the bottom row, we can prove the last identity in Theorem~\ref{thm:reduction}. Up to a global rotation of $\pi$, the same proof applies to the second identity, too.

Thus, we can complete the proof of Theorem~\ref{thm:main1} as follows. Without loss of generality, consider a $2\times 2$ cluster. By repeatedly applying Theorem~\ref{thm:reduction} to all the rows that do not overlap with this cluster, we obtain a snake; see Eq.~\eqref{eq:horizontal_snake}. From this explicit form, one can immediately verify that the marginal on the $2\times 2$ cluster is equal to $\Prob{\miniblockfull}$. Thus, we conclude with the following theorem.

\theoremone*

\section{Entropy}
\label{sec:entropy}

Because the object defined in Eq.~\eqref{eq:consistent_global_state} is a probability distribution, we can define its entropy. Moreover, the entropy has a \emph{local} decomposition. Let $H(N,M)$ be the Shannon entropy of the probability distribution in Eq.~\eqref{eq:consistent_global_state}. We obtain:
\begin{equation}
\begin{aligned}
    H(N,M) &= (N-1)(M-1)H\left(\Prob{\centeredTikZ{
    \foreach \x in {0, 1}
    {
    \foreach \y in {0, 1}
    {
    \emptysquare{\x*0.5+\y*0.25}{\y*0.5};
    }
    }
    }} \right) + 
    (N-2)(M-2)H\left( 
    \Prob{\centeredTikZ{
    \foreach \x in {0}
    {
    \foreach \y in {0}
    {
    \emptysquare{\x*0.5+\y*0.25}{\y*0.5};
    }
    }
    }}
    \right) \\
    &-\left(
    (N-2)(M-1)
    H\left( 
    \Prob{\centeredTikZ{
    \foreach \x in {0}
    {
    \foreach \y in {0,1}
    {
    \emptysquare{\x*0.5+\y*0.25}{\y*0.5};
    }
    }
    }}\right)
    + (N-1)(M-2)
    H\left( 
    \Prob{\centeredTikZ{
    \foreach \x in {1,0}
    {
    \foreach \y in {0}
    {
    \emptysquare{\x*0.5+\y*0.25}{\y*0.5};
    }
    }
    }}
    \right)\right).\label{eq:entropy_expression}
\end{aligned}
\end{equation}
Moreover,
\begin{equation}
\begin{aligned}
    \lim_{N,M\to \infty} \frac{H(N,M)}{NM}&= H\left(\Prob{\centeredTikZ{
    \foreach \x in {0, 1}
    {
    \foreach \y in {0, 1}
    {
    \emptysquare{\x*0.5+\y*0.25}{\y*0.5};
    }
    }
    }} \right) + 
    H\left( 
    \Prob{\centeredTikZ{
    \foreach \x in {0}
    {
    \foreach \y in {0}
    {
    \emptysquare{\x*0.5+\y*0.25}{\y*0.5};
    }
    }
    }}\right) -
    H\left( 
    \Prob{\centeredTikZ{
    \foreach \x in {0}
    {
    \foreach \y in {0,1}
    {
    \emptysquare{\x*0.5+\y*0.25}{\y*0.5};
    }
    }
    }}\right) -H\left( 
    \Prob{\centeredTikZ{
    \foreach \x in {1,0}
    {
    \foreach \y in {0}
    {
    \emptysquare{\x*0.5+\y*0.25}{\y*0.5};
    }
    }
    }}
    \right) \\
    &=H\left(\Prob{\centeredTikZ{
    \foreach \x in {0, 1}
    {
    \foreach \y in {0, 1}
    {
    \emptysquare{\x*0.5+\y*0.25}{\y*0.5};
    }
    }
    }} \right)-
    H\left(\Prob{\centeredTikZ{
    \foreach \x in {0, 1}
    {
    \foreach \y in {1}
    {
    \emptysquare{\x*0.5+\y*0.25}{\y*0.5};
    }
    }
    \emptysquare{1}{1};
    }
    } \right).
    \end{aligned}
\end{equation}

One may worry that Eq.~\eqref{eq:entropy_expression} is merely an entropy of \emph{some} state obeying $\mathcal{C}_L$ and $\mathcal{C}_M$. However, one can prove that this is the maximum entropy consistent with the given marginals by using the following inequality:
\begin{equation}
    H(P[XY]) + H(P[YZ]) - H(P[Y])- H(P[XYZ]) \geq 0
    \label{eq:ssa}
\end{equation}
for any probability distribution over three random variables $X,Y,$ and $Z$. 

Specifically, first consider a probability distribution over the first two rows, which is consistent with the given marginals. Then, by Eq.~\eqref{eq:ssa}, the entropy of this probability distribution is upper bounded by
\begin{equation}
H(P[2]) \leq  (N-1)H\left(\Prob{\centeredTikZ{
    \emptysquare{0}{0};
    \emptysquare{0.5}{0};
    \emptysquare{0.25}{0.5};
    \emptysquare{0.75}{0.5};
    }}\right) - (N-2) H\left(\Prob{\centeredTikZ{
    \emptysquare{0}{0};
    \emptysquare{0.25}{0.5};
    }}\right),
    \label{eq:med_initial}
\end{equation}
where $P[2]$ can be \emph{any} probability distribution over the first two rows. Moreover, we have the following recursive inequality:
\begin{equation}
   H(P[k+1]) \leq H(P[k]) + 
   (N-1)H\left(\Prob{\centeredTikZ{
    \emptysquare{0}{0};
    \emptysquare{0.5}{0};
    \emptysquare{0.25}{0.5};
    \emptysquare{0.75}{0.5};
    }}\right) - (N-2) H\left(\Prob{\centeredTikZ{
    \emptysquare{0}{0};
    \emptysquare{-0.5}{0};
    \emptysquare{0.25}{0.5};
    }}\right) - H(\Prob{\centeredTikZ{
    \emptysquare{0}{0};
    \emptysquare{0.5}{0};
    }}) 
    \label{eq:med}
\end{equation}
where $P[k]$ can be any probability distribution over the first $k$ rows. From Eq.~\eqref{eq:med_initial} and Eq.~\eqref{eq:med}, we obtain the following inequality for a general probability distribution $P[\Omega]_{N\times M}$ over $N\times M$ clusters which is consistent with $P[\miniblockfull]$:
\begin{equation}
\begin{aligned}
    H(P[\Omega]_{N\times M}) &\leq (N-1)(M-1)H\left(\Prob{\centeredTikZ{
    \foreach \x in {0, 1}
    {
    \foreach \y in {0, 1}
    {
    \emptysquare{\x*0.5+\y*0.25}{\y*0.5};
    }
    }
    }} \right) + 
    (N-2)(M-2)H\left( 
    \Prob{\centeredTikZ{
    \foreach \x in {0}
    {
    \foreach \y in {0}
    {
    \emptysquare{\x*0.5+\y*0.25}{\y*0.5};
    }
    }
    }}
    \right) \\
    &-\left(
    (N-2)(M-1)
    H\left( 
    \Prob{\centeredTikZ{
    \foreach \x in {0}
    {
    \foreach \y in {0,1}
    {
    \emptysquare{\x*0.5+\y*0.25}{\y*0.5};
    }
    }
    }}\right)
    + (N-1)(M-2)
    H\left( 
    \Prob{\centeredTikZ{
    \foreach \x in {1,0}
    {
    \foreach \y in {0}
    {
    \emptysquare{\x*0.5+\y*0.25}{\y*0.5};
    }
    }
    }}
    \right)\right).\label{eq:med_final}
\end{aligned}
\end{equation}

From Eq.~\eqref{eq:entropy_expression}, we can see that Eq.~\eqref{eq:consistent_global_state} satisfies the inequality in Eq.~\eqref{eq:med_final} with an equality. Therefore, we conclude 
\begin{equation}
\begin{aligned}
    \max_{P[\Omega]_{N\times M} \stackrel{c}{=} P[\miniblockfull]}H(P[\Omega]_{N\times M}) &= (N-1)(M-1)H\left(\Prob{\centeredTikZ{
    \foreach \x in {0, 1}
    {
    \foreach \y in {0, 1}
    {
    \emptysquare{\x*0.5+\y*0.25}{\y*0.5};
    }
    }
    }} \right) + 
    (N-2)(M-2)H\left( 
    \Prob{\centeredTikZ{
    \foreach \x in {0}
    {
    \foreach \y in {0}
    {
    \emptysquare{\x*0.5+\y*0.25}{\y*0.5};
    }
    }
    }}
    \right) \\
    &-\left(
    (N-2)(M-1)
    H\left( 
    \Prob{\centeredTikZ{
    \foreach \x in {0}
    {
    \foreach \y in {0,1}
    {
    \emptysquare{\x*0.5+\y*0.25}{\y*0.5};
    }
    }
    }}\right)
    + (N-1)(M-2)
    H\left( 
    \Prob{\centeredTikZ{
    \foreach \x in {1,0}
    {
    \foreach \y in {0}
    {
    \emptysquare{\x*0.5+\y*0.25}{\y*0.5};
    }
    }
    }}
    \right)\right).\label{eq:maxent_exact}
\end{aligned}
\end{equation}
This is precisely Theorem~\ref{thm:main2}, restated below.
\theoremtwo*

In particular, 
\begin{equation}
\boxed{
\max_{\substack{P[\Omega]_{N\times M} \stackrel{c}{=}P[\miniblockfull]}}
    \left(\lim_{N,M\to \infty} \frac{H(P[\Omega]_{N\times M})}{NM}\right)
    =H\left(\Prob{\centeredTikZ{
    \foreach \x in {0, 1}
    {
    \foreach \y in {0, 1}
    {
    \emptysquare{\x*0.5+\y*0.25}{\y*0.5};
    }
    }
    }} \right)-
    H\left(\Prob{\centeredTikZ{
    \foreach \x in {0, 1}
    {
    \foreach \y in {1}
    {
    \emptysquare{\x*0.5+\y*0.25}{\y*0.5};
    }
    }
    \emptysquare{1}{1};
    }
    } \right),
}\label{eq:max_entropy}
\end{equation}
where $P[\miniblockfull]$ is assumed to obey $\mathcal{C}_L$ and $\mathcal{C}_M$.

\subsection{Upper bounding thermodynamic free energy density}
Using the variational expression for the thermodynamic free energy, we can obtain the following upper bound.
\begin{equation}
\begin{aligned}
    F_{N,M}(\beta) &= \min_{P[\Omega]_{N\times M}} \left(\mathbb{E}_{P[\Omega]_{N\times M}}[E[\Omega]] - TH(P[\Omega]_{N\times M})  \right) \\ 
    &\leq \min_{P[\Omega]_{N\times M} \stackrel{c}{=}P[\miniblockfull]} \left(\mathbb{E}_{P[\Omega]_{N\times M}}[E[\Omega]] - TH(P[\Omega]_{N\times M})  \right).
\end{aligned}
\end{equation}

In the thermodynamic limit, by Eq.~\eqref{eq:max_entropy}, the free energy per $1\times 1$ cluster is upper bounded by
\begin{equation}
    \lim_{N,M\to \infty} \frac{F_{N,M}(\beta)}{NM} \leq  \mathbb{E}_{P[\miniblockfull]}[h_{\miniblockfull}] - \frac{1}{\beta}\left(H\left(\Prob{\centeredTikZ{
    \foreach \x in {0, 1}
    {
    \foreach \y in {0, 1}
    {
    \emptysquare{\x*0.5+\y*0.25}{\y*0.5};
    }
    }
    }} \right)-
    H\left(\Prob{\centeredTikZ{
    \foreach \x in {0, 1}
    {
    \foreach \y in {1}
    {
    \emptysquare{\x*0.5+\y*0.25}{\y*0.5};
    }
    }
    \emptysquare{1}{1};
    }
    } \right) \right),\label{eq:free_energy_upper}
\end{equation}
where the expectation value is taken over the probability distribution defined over the $2\times 2$ cluster, subject to the constraints $\mathcal{C}_L$ and $\mathcal{C}_M$. 

Thus, we conclude\footnote{It is interesting to note that a \emph{lower bound} to the thermodynamic free energy is formulated in terms of a similar optimization problem~\cite{Poulin2011}. An important difference is that our bound has an extra non-linear constraint $\mathcal{C}_M$ whereas Ref.~\cite{Poulin2011} only has a linear constraint, namely $\mathcal{C}_L$.}:
\begin{equation}
\boxed{
     \lim_{N,M\to \infty} \frac{F_{N,M}(\beta)}{NM} \leq 
    \min_{\substack{P[\miniblockfull]\text{ obeying}\\ \mathcal{C}_L,\text{ } \mathcal{C}_M }}
    \widetilde{f}_{\beta}\left(\Prob{\centeredTikZ{
    \foreach \x in {0, 1}
    {
    \foreach \y in {0, 1}
    {
    \emptysquare{\x*0.5+\y*0.25}{\y*0.5};
    }
    }
    }} \right),} \label{eq:sandwich}
\end{equation}
where 
\begin{equation}
    \widetilde{f}_{\beta} \left(\Prob{\centeredTikZ{
    \foreach \x in {0, 1}
    {
    \foreach \y in {0, 1}
    {
    \emptysquare{\x*0.5+\y*0.25}{\y*0.5};
    }
    }
    }} \right)=
     \mathbb{E}_{P[\miniblockfull]}[h_{\miniblockfull}] - \frac{1}{\beta}\left(H\left(\Prob{\centeredTikZ{
    \foreach \x in {0, 1}
    {
    \foreach \y in {0, 1}
    {
    \emptysquare{\x*0.5+\y*0.25}{\y*0.5};
    }
    }
    }} \right)-
    H\left(\Prob{\centeredTikZ{
    \foreach \x in {0, 1}
    {
    \foreach \y in {1}
    {
    \emptysquare{\x*0.5+\y*0.25}{\y*0.5};
    }
    }
    \emptysquare{1}{1};
    }
    } \right) \right).
\end{equation}

Eq.~\eqref{eq:sandwich} is the main result of this paper. We can \emph{upper bound} the thermodynamic free energy density by a solution to the optimization problem over an objective function $\widetilde{f}_{\beta}$. For a fixed translationally invariant Hamiltonian, one may hope to get progressively sharper bounds by increasing the size of the clusters.

\section{Discussion}
\label{sec:discussion}

We have proposed a new family of probability distributions that are ``exactly solvable.'' These distributions go beyond the mean-field approximation yet allow a closed-form expression for both the energy and the (maximum) entropy. By utilizing this ansatz, one may be able to tightly constrain the thermodynamic free energy density to a small interval, by combining the upper bound in Eq.~\eqref{eq:sandwich} with the well-known lower bounds based on convex programmings~\cite{Poulin2011}.

An important open question is the effect of error. In our work, we have provided an equality constraint on the set of marginals. What would happen if they are not satisfied exactly? There is a bound already proved in a quantum setting~\cite{Kim2016}, and we can expect a similar bound to apply here. The main idea is to  view Eq.~\eqref{eq:consistent_global_state} as a probability distribution created by a sequence of linear maps that act on a bounded-size region. The entropy of the global probability distribution can be bounded by invoking a continuity bound on entropies. Assuming that the error incurred in each step is $\epsilon,$ one can expect the total error accumulated on each marginal is $\mathcal{O}(NM\epsilon)$. So for finite $N$ and $M$, by first taking the $\epsilon \to 0$ limit, one may be able to show the robustness of our ansatz. However, the problem with this approach is that the total  effect of error diverges in the infinite volume limit, even for the upper bound on the thermodynamic free energy density. A bound that stays finite in the infinite volume limit will be clearly desirable.

\section*{Acknowledgement}
I thank Miguel Navascues and Mirjam Weilenmann for sharing their unpublished note, which contained a related result. I also thank them for encouraging me to submit this paper on arXiv. This work was supported by the Australian Research Council via the Centre of Excellence in Engineered Quantum Systems (EQUS) project number CE170100009.

\bibliographystyle{myhamsplain2}
\bibliography{bib}
\end{document}